\newtheorem{theorem}{Theorem}
\newtheorem{definition}{Definition}
\title{Governance of Social Welfare in Networked Markets}
\author{
  MohammadAmin Fazli \\
  Department of Computer Engineering\\
  Sharif University of Technology\\
  Tehran, Iran \\
  \texttt{fazli@sharif.edu} \\
  %% examples of more authors
   \And
 Alireza Amanihamedani \\
  Department of Computer Engineering\\
  Sharif University of Technology\\
  Tehran, Iran \\
  \texttt{aramani@ce.sharif.edu} \\
  %% \AND
  %% Coauthor \\
  %% Affiliation \\
  %% Address \\
  %% \texttt{email} \\
  %% \And
  %% Coauthor \\
  %% Affiliation \\
  %% Address \\
  %% \texttt{email} \\
  %% \And
  %% Coauthor \\
  %% Affiliation \\
  %% Address \\
  %% \texttt{email} \\
}
\begin{document}
\maketitle

\begin{abstract}
This paper aims to investigate how a central authority (e.g. a government) can increase social welfare in a network of markets and firms. In these networks, modeled using a bipartite graph, firms compete with each other \textit{\`a la} Cournot. Each firm can supply homogeneous goods in markets which it has access to. The central authority may take different policies for its aim. In this paper, we assume that the government has a budget by which it can supply some goods and inject them into various markets. We discuss how the central authority can best allocate its budget for the distribution of goods to maximize social welfare. We show that the solution is highly dependent on the structure of the network.  Then, using the network's structural features, we present a heuristic algorithm for our target problem. Finally, we compare the performance of our algorithm with other heuristics with experimentation on real datasets. 
\end{abstract}

% keywords can be removed
\keywords{Networked Markets, Cournot Competition, Social Welfare, Governance, Optimization. 
}

\section{Introduction}
Cournot Competition in single-market setting has been vastly studied in the literature \cite{kreps1983quantity,fang2015managing,tramontana2015local,constantatos2019accommodation}. In this oligopolistic model, each firm decides the quantity of the homogenous good it is willing to supply to the market. Then, according to the inverse demand function, the market-clearing price is determined based on the aggregate supply in the market. However, with the emergence of diverse and complicated economic scenarios, single-market models are inadequate for studying reality. In many settings, firms can compete in different markets, whether or not the good is identical in those markets. Typically, this situation is modeled using a bipartite graph in which one side of nodes represents firms and the other side depicts various markets. Each market is characterized by an inverse demand function. Multi-market competition is found abundantly in industries such as natural gas, water, electricity, airlines, cement, healthcare, etc \cite{skantze2000stochastic,heggestad1978multi,gilbert2017multi,villena2017dynamics}.

One question that arises naturally in the presence of strategic agents is the means by which it is possible to raise welfare measures \cite{just1979welfare,bateman2006aggregation}. One such measure is social welfare, which aims to capture the aggregate well-being in the economic environment \cite{sen2018collective,clarke1997managerial}. In this case, it is typically the government that seeks and has the responsibility of higher social welfare. While there have been many studies on interactions between firms and equilibria in networked markets (See e.g. \cite{bertrand,shayan,mamtagh}), to the best of our knowledge, there is little work on how to govern and control social welfare in networked markets. The prevalence of networked markets in real-life experiences motivates us to study social welfare in this model. Our paper takes one step forward towards this objective.

We consider a limited intervention budget for the government in the pursuit of higher social welfares. Therefore, we assume that the government is able to have a small amount of supply into every market. This small intervention setting enables us to use some techniques for the estimation of social welfare in terms of government's supplies. The simple structure of the approximation leads to a strategy for the government. However, it is good to note that the actions taken by the government, are specified by the structure of the network. After all, this structure, along with the consumers' behaviors, are sufficient statistics to specify the market.

\subsection{Related Works}
Our work is in essence related to several groups of papers. First, there have been many attempts in studying the strategic behavior of firms and equilibria in the competition \cite{nava2009quantity,shayan,anshelevich2015price,pang2017efficiency}. One such study, which has been our first step-stone, is done by Bimpikis el al. \cite{shayan}, where they present a "characterization of the production quantities at the unique equilibrium of the resulting game for any given network", in terms of supply paths in the network. Furthermore, they introduce the \textit{price-impact} matrix which enables them to explore the effect of changes in network structure on firms' profits and consumer welfare. These changes include entering of a firm in a new market and also merging of two firms. Their results challenge the standard beliefs in Cournot oligopoly that more competition necessarily leads to higher welfare. Relatedly, Abolhassani et al. \cite{mamtagh} turn their focus on finding algorithms that compute pure Nash equilibria in Cournot competitions in networks. In \cite{bertrand}, the impact of monopolies on social welfare in a certain model of Bertrand network competition is studied.

Another thread of studies relevant to ours is the ones analyzing interactions in the networks and their impact on aggregate measures \cite{jackson2010social,aliz, sajadi2018affective,zhang2008clustering}. Most relatedly, Acemoglu et al. \cite{aliz} have proposed a framework that paves the way to examine equilibria in such inter-dependent agents setting and discover the effect of small microeconomic shocks on the economy's aggregate performance. Acting as our main inspiring study, they use Taylor expansion to acquire insights on the impact of shocks. Their examinations yield different results about the economy's \textit{ex ante} aggregate performance in the case of linear and non-linear worlds. To understand how the structure of the network shapes the economy's performance, they demonstrate that the \textit{Bonacich centrality} measure can capture this effect when the nature of the interactions is linear. Such analysis is prevalent in economics. For example, the general notion of production networks demands consideration of dependencies and network effects \cite{primer}.

Lastly, following the connections found between \textit{Bonacich centrality} and network effects in network Cournot competition \cite{shayan,ilkilic2009cournot}, studies about controlling centrality measures in networks can be considered related to ours. Generally, with an established relationship between centrality measures and social welfare in our setting, one might use these methods to change the structure of the competition such that social welfare increases. As such articles, \cite{nicosia2012controlling,liu2012control,bonacich1998controlling} model the centrality control problem as an optimization problem and presents an algorithm to solve it.

\section{Problem Formulation}
Consider a netwok game $\mathcal{G}$ which consists of $n$ firms $F = \{f_1, \cdots, f_n\}$ and $m$ markets $M = \{m_1, \cdots, m_m\}$ in which the firms compete. Each firm has access to a set of markets, meaning that it can supply the goods only in those specific markets. For firm $f_i$, let $M_i$ be the set of those markets. 

Similarly, let $F_j$ denote the set of firms that have access to market $m_j$. The amount of good that firm $f_i$ supplies in $m_j$ is shown by $q_{ij}$. Moreover, firm $f_i$ would incur the production cost $C_i(q)$ ($q$ is the vector of all $q_{ij}$s). We consider the inverse demand functions of the markets as linear. More specifically, market $m_j$ is governed by the relation 
\begin{equation}
P_j(q) = \alpha_j - \beta_j \sum_{f_i \in F_j} q_{ij}.
\end{equation}
 Additionally, 
 \begin{equation}
 C_i(q) = c_i\cdot(\sum_{m_j \in M_i} q_{ij})^2.
 \end{equation}
  For the sake of our analysis, we at first, suppose that for all $m_i \in M$, $\alpha_i = \alpha$ and $\beta_i = \beta$ and for all $f_j \in F$, $c_j = c$. We model this economy with a bipartite graph $G =(V,E)$. An example of this graph can be seen in Figure \ref{f:samplegraph}. 

\begin{figure}[h]
\centering
\includegraphics[width=0.3\textwidth]{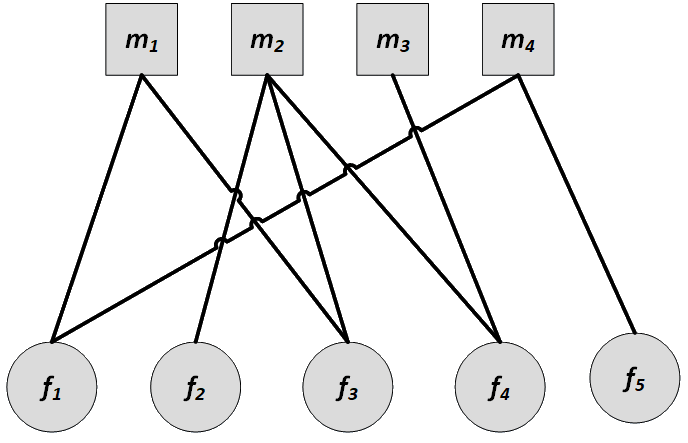}
\caption{A Graph for a Networked Market}
\label{f:samplegraph}
\end{figure}

Briefly speaking, we can consider each $f_i$'s profit as a combination of the afformentioned components:
\begin{equation}
\pi_i(q) = \sum_{m_k \in M_i}{q_{ik}\cdot P_k(q)}-C_i(q)
\end{equation}
 So given a network market graph $G$, each $f_i$ in competition with other firms solves the following optimization problems for computing its best response. 
\begin{equation}
\begin{array}{lll}{\underset{q_{i}}{\operatorname{maximize}}} & {\pi_{i}\left(\boldsymbol{q}_{i}, \boldsymbol{q}_{-\boldsymbol{i}}\right)} & {} \\ {\text { subject to }} & {q_{i k} \geq 0} & {\text { for } m_k \in M_{i}} \\ {} & {q_{i k}=0} & {\text { for } m_k \notin M_{i}}\end{array}
\end{equation}
where $q_i$ and $q_{-i}$ denotes the vector of production quantities of $f_i$ and its competitors, respectively. 

Bimpikis et al. \cite{shayan}, have focused on the equilibrium analysis of this model and their main result about the unique Nash equilibrium of this game is adopted as the foundation of this research.

\begin{theorem} \label{t:equil} [Adopted from Bimpikis et al. \cite{shayan}] The unique Nash equilibrium of the game is given by
$$q^* = [I+\gamma W]^{-1}\gamma \bar{\alpha},$$
where $\gamma = \frac{1}{2(c+\beta)}$, $\bar{\alpha}$ is a $|E| \times 1$ vector such that for every edge $(i,k)\in E$ we have $\bar{\alpha}_{ik}=\alpha_k$ and $W$ is an $|E|\times |E|$ matrix whose entries are
$$w_{i_{1} k_{1}, i_{2} k_{2}}=\left\{\begin{array}{l}{2 c \text { if } i_{1}=i_{2}, k_{1} \neq k_{2}} \\ {\beta \text { if } i_{1} \neq i_{2}, k_{1}=k_{2}} \\ {0 \text { otherwise }}\end{array}\right.$$

\end{theorem}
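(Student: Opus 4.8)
The plan is to characterize the equilibrium through each firm's first-order conditions and then assemble these conditions into a single linear system. First I would establish that firm $f_i$'s objective is strictly concave in its own decision vector $q_i$: substituting the linear inverse demand $P_k$ and the quadratic cost $C_i$ into $\pi_i$ renders $\pi_i$ a quadratic form in $q_i$, and a direct computation of the own-Hessian yields a matrix of the form $-2\beta I - 2c J$, where $J$ is the all-ones matrix indexed by the markets of $f_i$. Since $2\beta I + 2cJ$ has eigenvalues $2\beta + 2c|M_i|$ and $2\beta$, both strictly positive whenever $\beta > 0$, this Hessian is negative definite, so each firm's best-response problem admits a unique maximizer and an interior equilibrium is fully pinned down by stationarity.

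Next I would differentiate $\pi_i$ with respect to a single $q_{ik}$. Writing $S_k = \sum_{f_l \in F_k} q_{lk}$ for the aggregate supply in market $k$ and $Q_i = \sum_{m_j \in M_i} q_{ij}$ for the total output of $f_i$, the derivative equals $\alpha - \beta S_k - \beta q_{ik} - 2cQ_i$. Setting it to zero and isolating the $q_{ik}$ term from the remaining summands gives
\[
2(\beta+c)\,q_{ik} + \beta\!\!\sum_{\substack{f_l \in F_k \\ l \neq i}}\!\! q_{lk} \;+\; 2c\!\!\sum_{\substack{m_j \in M_i \\ j \neq k}}\!\! q_{ij} \;=\; \alpha .
\]
Dividing by $2(\beta+c)$ and recognizing $\gamma = 1/(2(\beta+c))$, the two sums are precisely the $(i,k)$-row of $W$ applied to $q$, since $W$ assigns weight $\beta$ to pairs of edges sharing a market and weight $2c$ to pairs sharing a firm. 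Hence each stationarity condition reads $q_{ik} + \gamma (Wq)_{ik} = \gamma\,\bar\alpha_{ik}$, and stacking over all $|E|$ edges produces the system $(I + \gamma W)\,q = \gamma\,\bar\alpha$.

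Finally I would show that $I + \gamma W$ is invertible, which simultaneously delivers the closed form $q^* = (I+\gamma W)^{-1}\gamma\,\bar\alpha$ and, combined with strict concavity, establishes uniqueness. The cleanest route is to note that the Jacobian of the stacked pseudo-gradient equals $2(\beta+c)I + W = \gamma^{-1}(I+\gamma W)$, so the positive definiteness of one is equivalent to that of the other. Decomposing $W = 2c\,A_{\mathrm{firm}} + \beta\,A_{\mathrm{market}}$, where each symmetric matrix $A_{\mathrm{firm}}$ and $A_{\mathrm{market}}$ is a block-diagonal direct sum of $(J-I)$ blocks with smallest eigenvalue $-1$, Weyl's inequality gives $\lambda_{\min}(W) \geq -(2c+\beta)$, whence $2(\beta+c)I + W \succeq \beta I \succ 0$. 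Positive definiteness of this symmetric pseudo-gradient Jacobian is exactly Rosen's diagonal-strict-concavity condition, which forces the Nash equilibrium to be unique.

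The main obstacle I anticipate lies not in the algebra but in the treatment of the nonnegativity constraints $q_{ik}\geq 0$. The closed form presumes an interior equilibrium in which every edge carries strictly positive flow; when some constraints bind, the correct statement restricts $W$ and $\bar\alpha$ to the active edges, and one must verify through complementary slackness that the omitted edges indeed optimally receive zero supply. Handling this active-set subtlety carefully—rather than the eigenvalue bound, which is routine—is where the genuine care is needed.
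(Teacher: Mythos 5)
Your proposal addresses a statement that this paper never actually proves: Theorem \ref{t:equil} is imported from Bimpikis et al.\ \cite{shayan}, and the paper even \emph{assumes} invertibility of $I+\gamma W$ rather than establishing it. So there is no in-paper argument to compare against; the relevant comparison is with the cited source, whose characterization likewise rests on assembling each firm's first-order conditions into a linear system over the edge set. Your derivation of that system is correct: the per-edge stationarity condition $2(\beta+c)q_{ik}+\beta\sum_{l\neq i}q_{lk}+2c\sum_{j\neq k}q_{ij}=\alpha$ is exactly the $(i,k)$ row of $(I+\gamma W)q=\gamma\bar\alpha$, and your own-Hessian computation $-2\beta I-2cJ$ correctly gives strict concavity of each firm's problem. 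Your invertibility argument via the decomposition $W=2c\,A_{\mathrm{firm}}+\beta\,A_{\mathrm{market}}$ and Weyl's inequality is a genuine addition: it yields $I+\gamma W\succeq\gamma\beta I\succ 0$, i.e., it proves the very assumption the paper leaves unproved. One slip: the Jacobian of the stacked pseudo-gradient is $-(2(\beta+c)I+W)$, not $2(\beta+c)I+W$; the sign does not affect the Rosen argument, since you only use definiteness of the symmetric part, but it should be stated correctly.

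The one substantive gap is the issue you flag in your last paragraph but do not resolve. Rosen's theorem gives uniqueness of the equilibrium of the \emph{constrained} game (with $q_{ik}\geq 0$), while your closed form solves the \emph{unconstrained} stationarity system. The two coincide only if $(I+\gamma W)^{-1}\gamma\bar\alpha$ is entrywise nonnegative, so that the linear-system solution satisfies the KKT conditions with every constraint slack. Without verifying this (or imposing it as a hypothesis, as the source effectively does), the theorem as literally stated is not fully proved: on networks where the formula produces a negative entry, the true equilibrium has some firms inactive in some markets and is not given by the displayed formula. Since you correctly identify this as the crux and sketch the complementary-slackness repair, the gap is one of execution rather than understanding---but a complete proof must either establish nonnegativity of the formula's output in the symmetric case, or restrict the claim to networks where it holds.
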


The matrix $[I+\gamma W]^{-1}$ is called the \textit{Leontief inverse}. We assume that the matrix $[I+\gamma W]$ is inversible. For a given realistic economy, this assumption is shown to be true \cite{sonis2000new,waugh1950inversion}. In this paper, we propose and formalize the problem of optimized governance of the afformentioned network market (networked cournot competition) with the objective of maximizing the social welfare. Social welfare ($SW$) in cournot competitions is defined as the sum of consumer surplus ($CS$) and firms' total profit. The consumer surplus in the Nash equilibrium is computed by the following formula (see \cite{hammond2002essential,shayan}):
\begin{equation}
CS =  \sum_{m_k \in M} \frac{\left(\alpha_{k}-P_{k}(q^*)\right)^{2}}{2 \beta}.
\end{equation}
Therefore the social welfare's formula is:
\begin{equation}
\begin{array}{ll}
SW  & =\sum_{f_i \in F}{\pi_i(q^*)} + CS \\
{ }  & = \sum_{f_i \in F}{[\sum_{m_k \in M_i}{q^*_{ik}\cdot P_k(q^*)}-C_i(q^*)]}\\
{ } & \hspace{0.5cm}+ \sum_{m_k \in M} 2/\beta \cdot \left(\alpha_{k}-P_{k}(q^*)\right)^{2}
\end{array} 
\end{equation}

Now, assume that an additional node which is called the government and has access to all markets is added to the network. This node's target is to maximize the social welfare without concerning its own utility. It has been provided with a budget of $B$ and can use this budget to provide some shocks to each market. In this paper, each shock to market $m_k$ is defined as provisioning some goods ($\epsilon_k \leq q^t$) to this market alongside the competing firms. $q^t$ is a threshold that is forced by external entities such as law or social pressure. Firms compete until they reach an equilibrium. The equilibrium can be computed by the following theorem.

\begin{theorem} \label{equil2}
The unique Nash equilibrium of the game $\mathcal{G}$ in the presence of shocks $\{\epsilon_i\}_{i=1}^{m}$ is given by
$$q^{*}=[I+\gamma W]^{-1} \gamma(\bar{\alpha}-\overline{\beta \epsilon}),$$
where $W$ and $\overline{\alpha}$ are defined as in Theorem \ref{t:equil} and $\overline{\beta \epsilon}$ is a $|E| \times 1$ vector such that for every edge $(i,k)\in E$ we have $\bar{\beta \epsilon}_{ik}=\beta_k \cdot \epsilon_k$
\end{theorem}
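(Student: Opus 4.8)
The plan is to reduce Theorem~\ref{equil2} to Theorem~\ref{t:equil} by showing that the government's shock leaves the strategic structure of the firms' game untouched and merely shifts the demand intercept of each market. First I would write out the modified inverse demand function of market $m_k$. Since the government injects $\epsilon_k$ units of the good into $m_k$ alongside the firms, the aggregate supply reaching consumers becomes $\epsilon_k + \sum_{f_i \in F_k} q_{ik}$, and the market-clearing price reads
\begin{equation}
P_k(q) = \alpha_k - \beta_k\Big(\epsilon_k + \sum_{f_i \in F_k} q_{ik}\Big) = (\alpha_k - \beta_k\epsilon_k) - \beta_k \sum_{f_i \in F_k} q_{ik}.
\end{equation}
The crucial observation is that this is precisely the original price function with the intercept $\alpha_k$ replaced by the effective intercept $\alpha_k' := \alpha_k - \beta_k\epsilon_k$, while the slope $\beta_k$ is left intact.

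Next I would argue that the firms then face a game identical to the original one except for these shifted intercepts. The shocks $\epsilon_k$ are exogenous constants from the firms' viewpoint rather than strategic variables, and the government supplies them without collecting revenue, so each firm's payoff remains $\pi_i(q) = \sum_{m_k\in M_i} q_{ik} P_k(q) - C_i(q)$ with the same cost $C_i$; only firm $i$'s own quantities $q_{ik}$ are multiplied by the price, never $\epsilon_k$. Substituting the modified $P_k$ shows that $\pi_i$ coincides with the original profit after the single replacement $\alpha_k \mapsto \alpha_k'$, so the first-order conditions, and hence the whole best-response system, retain exactly the same form.

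The key to the reduction is that the matrix $W$ and the scalar $\gamma = \tfrac{1}{2(c+\beta)}$ depend only on the cost coefficient $c$ and the slope $\beta$, and not on the intercepts $\alpha_k$. Because the shock perturbs only the additive intercept terms, both $W$ and $\gamma$ are invariant, and the entire effect of the intervention is channeled through the constant vector of the linear equilibrium system. Applying Theorem~\ref{t:equil} verbatim to the modified game, with $\bar{\alpha}$ replaced by the vector whose entries are $\alpha_k' = \alpha_k - \beta_k\epsilon_k = \bar{\alpha}_{ik} - \overline{\beta\epsilon}_{ik}$, immediately yields
$$q^* = [I+\gamma W]^{-1}\gamma(\bar{\alpha} - \overline{\beta\epsilon}),$$
which is the claimed formula. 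I do not expect a genuine obstacle here; the only points demanding care are to confirm that the shock enters demand linearly and additively so that it maps cleanly onto an intercept shift, and that uniqueness transfers directly from Theorem~\ref{t:equil}, since the coefficient matrix $[I+\gamma W]$ is unchanged and assumed invertible.
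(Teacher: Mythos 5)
Your proposal is correct and follows essentially the same route as the paper: both absorb the government's supply $\epsilon_k$ into the demand intercept, obtaining an auxiliary game with $\alpha_k$ replaced by $\alpha_k - \beta_k\epsilon_k$, and then invoke Theorem~\ref{t:equil} on that game, noting that $W$ and $\gamma$ are unaffected. Your write-up is in fact slightly more explicit than the paper's on why the reduction is legitimate (the intercept-only dependence and the transfer of uniqueness), but the underlying argument is identical.
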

\begin{proof}
In the presence of shocks we can rewrite the firms' utility functions:
$$
\begin{array}{ll}
\pi_i(q, \epsilon) & = \sum_{m_{k} \in M_{i}} q_{i k} \cdot P_{k}(q)-C_{i}(q) \\
{ } & = \sum_{m_k \in M_i}{q_{ik}[\alpha_k-\beta_k(\sum_{f_j\in F_k}{q_{jk}} + \epsilon_k)]} \\
{ } & \hspace{0.5cm} - c_i\cdot(\sum_{m_k\in M_i}{q_{ik}})^2 \\
{ } & = \sum_{m_k \in M_i}{q_{ik}[(\alpha_k-\beta_k\epsilon_k)+\beta_k\sum_{f_j\in F_k}{q_{jk}}]} \\
{ } & \hspace{0.5cm} - c_i\cdot(\sum_{m_k\in M_i}{q_{ik}})^2 \\
\end{array}.
$$

Assume that we have a new game $\mathcal{G'}$ where everything is the same as the previous setting ($\mathcal{G})$ except that the values of $\alpha_k$s have changed to $\alpha_k - \beta_k\epsilon_k$. By Theorem \ref{t:equil} , we will have the following formula for the Nash equilibrium point:
$$q^{*}=[I+\gamma W]^{-1} \gamma(\bar{\alpha}- \overline{\beta \epsilon})$$

Equilibria in $\mathcal{G}$ in the presence of shocks are equal to equilibria of $\mathcal{G'}$ because $\pi_i$s computed by the above formula is exactly what must be for $\mathcal{G'}$. So by following the method used in \cite{shayan} for proving Theorem \ref{t:equil}, our desired target will be achieved.

\end{proof} 
Note that for each $m_k \in M$, we must have $\epsilon_k < \alpha / \beta$, because if not, the price function $P_k$ will be negative and it is not acceptable. Now we can write the formulation of social welfare. The $SW$ function in the presence of shocks can be recalculated as follows:

$$
\begin{array}{ll}
SW  & =\sum_{f_i \in F}{\pi_i(q^*,\epsilon)} + CS \\
{ }  & = \sum_{f_i \in F}{\big[\sum_{m_k \in M_i}{q^*_{ik}\cdot P_k(q^*, \epsilon)}-C_i(q^*)\big]}\\
{ } & \hspace{0.5cm}+ \sum_{m_k \in M} 2/\beta_k \cdot \big(\alpha_{k}-P_{k}(q^*,\epsilon)\big)^{2} \\
{ } & = \sum_{f_i \in F}\Big[\sum_{m_k \in M_{i}} q^*_{i k}\big(\alpha_{k}-\beta \sum_{f_j \in F_{k}} q^*_{j k}-\beta\epsilon\big) \\
{ } & \hspace{1.5cm} -c\big(\sum_{m_k \in M_{i}} q^*_{i k}\big)^{2}\Big]\\
{ } & \hspace{0.5cm} + \sum_{m_k \in M} 2/\beta \cdot \Big[\big(\beta\sum_{f_j \in F_k}{q^*_{jk}+\beta\epsilon}\big)^2\Big] \\
{ } & = \sum_{m_k \in M} \Big(\sum_{f_i \in F_{k}} q^*_{i k} \alpha_{k}\Big) \\
{ } & \hspace{0.5cm}-(\beta/2)\cdot \sum_{m_k\in M}\Big(\sum_{f_i \in F_{k}} q^*_{i k}\Big)^{2}\\
{ } & \hspace{0.5cm}-c \sum_{f_i\in F}\Big(\sum_{m_k \in M_{i}} q^*_{i k}\Big)^{2} \\
{ } & \hspace{0.5cm}-\sum_{m_k \in M}\Big((\beta/2)\cdot \epsilon_k^2+\sum_{f_i \in F_k}q^*_{ik}\epsilon_k \Big) 
%{ } & = {q^*}^T\overline{\alpha}-(\beta/2+c){q^*}^T q^*- (1/2){q^*}^TW q^*\\
%{ } & \hspace{0.5cm}-\overline{\beta \epsilon}^Tq^* - (\beta/2)\overline{\epsilon}^T\overline{\epsilon}
\end{array}
$$
Therefore, we have the following vectorized formula, 
\begin{equation}\label{sw}
\begin{aligned} 
SW  = {q^*}^T\overline{\alpha}-(\beta/2+c){q^*}^T q^*- (1/2){q^*}^TW q^*\\
-\overline{\beta \epsilon}^Tq^* - (\beta/2)\overline{\epsilon}^T\overline{\epsilon}, \hspace{3cm}
\end{aligned}
\end{equation}

where $\overline{\epsilon}$ is a $m\times 1$ vector whose $k$th component is equal to $\epsilon_k$ and other things are defined as before (see Theorem \ref{t:equil} and Theorem \ref{equil2}).
By the above formulation the problem of governing social welfare with market shocks can be modeled by the optimization problem described in Definition \ref{finalOPT}. 

\begin{definition}  \label{finalOPT}
The problem of governing (maximizing) social welfare with shocks in a networked market $\mathcal{G}$ ($MaxSW(\mathcal{G})$) is defined as the following optimization problem:
\begin{equation*}
\begin{array}{ll@{}ll}
\text{Maximize}  & {q^*}^T\overline{\alpha}-(\beta/2+c){q^*}^T q^*- (1/2){q^*}^TW q^*&\\
{ }   &\hspace{0.5cm}-\overline{\beta \epsilon}^Tq^* - (\beta/2)\overline{\epsilon}^T\overline{\epsilon}& \\
\text{subject to}& \displaystyle  q^{*}=[I+\gamma W]^{-1} \gamma(\bar{\alpha}-\overline{\beta \epsilon})  & \\
                 &  \displaystyle  c\cdot(\sum_{m_k \in M} \epsilon_k)^2 \leq B  & { } \\
                 &  \displaystyle  0 \leq \epsilon_k \leq q^t  \hspace{2.5cm} \forall m_k \in M &\\
\end{array}
\end{equation*}
\end{definition}

All the elements of the sum depicted in the $SW$ formula are concave except $-\overline{\beta \epsilon}^Tq^* $ which makes the convex optimization frameworks unusable. In the next section, we devise a heuristic algorithm for this optimization problem. This is done by proposing a linear estimation for the social welfare and an optimization algorithm for maximizing it. Then in Section \ref{experiments}, by experimentation on real and synthetic data, the good performance of this heuristic algorithm is shown.

\section{Solution Estimation}
In this section, we provide some insights about the social welfare function and by linearizing it with Taylor expansion, we propose an algorithm called the \textit{Linear} heuristic for the $MaxSW(\mathcal{G})$ problem. More precisely in this section, we propose a metric that can be computed by the network structure and we analytically show that picking the markets with larger amounts of it can (approximately) maximize the social welfare. In the next section by running experiments on real and synthetic datasets, we will demonstrate the superiority of this metric over other metrics.

The main idea for this aim is to use the first order multivariate Taylor expansion \cite{konigsberger2013analysis} to create a linear approximation for the social welfare function. This approximation leads to a linear combination of $\epsilon_i$s:
$$SW(\epsilon) = SW(0) + \zeta_1 \epsilon_1+\zeta_2\epsilon_2+...+\zeta_m \epsilon_m.$$
Keeping in mind that the government has a limited budget for its interventions, we can consider the shocks small ($\forall_{m_k \in M}\epsilon_k \leq q^t$), so this approximation may be valid. The coefficient of each $\epsilon_i$ ($\zeta_i$) can be considered as the afformentioned metric. Since $SW$ is a differentiable function, we can write $SW$ as follows:
\begin{equation}
\begin{aligned}
SW(\epsilon) \approx SW(0)+\epsilon \cdot \nabla SW(0) \hspace{0.45cm}\\
= SW(0) + \sum_{r = 1}^m{\epsilon_r \frac{\partial SW}{\partial \epsilon_r}|_{\epsilon=0}}
\end{aligned}
\end{equation}
So, the amount of social welfare added by shocks is a linear combination of $\epsilon_r$s whose coefficients are $\zeta_r = \frac{\partial SW}{\partial \epsilon_r}|_{\epsilon=0}$. So to maximize the amount of social welfare, markets should be supplied in order of their $\zeta_r$s. The structure of this algorithm can be seen in Algorithm \ref{algo}. Here's how to calculate the coefficients.

Using Theorem \ref{equil2} and expanding the formula derived for $q^*$, we have:
\begin{equation}
\begin{aligned} 
q_{i k}^{*}=\frac{\alpha_{k}-2 c \sum_{m_\ell \in M_{i}, m_\ell \neq m_k} q_{i \ell}^{*}-\beta \sum_{f_j \in F_{k}} q_{j k}^{*}}{2(\beta+c)}\\
\hspace{-0.6cm}=\frac{\alpha_{k}}{2(\beta+c)}-\sum_{(j,\ell) \in E\left(\mathcal{G}\right)}(\gamma W)_{i k, j \ell} q_{j \ell}^{*} \hspace{1.5cm}
\end{aligned}
\end{equation}

If we define function $f(z) = \gamma \alpha - \gamma z$, 
\begin{equation}
q^*_{ij} = f(\sum_{k, \, l} w_{ij, k\ell} q^*_{k\ell} + \gamma \beta \epsilon_j)
\end{equation}
 denotes the amount firm $f_i$ supplies in market $m_j$ in equilibrium under the presence of shock $\epsilon_j$. Moreover, from Equation \ref{sw} with $h(x) = \alpha x - (\frac{\beta}{2} + c) x^2$ and $u(q_{ij}, q_{k\ell}) = w_{ij, k\ell} q_{ij} q_{k\ell}$,
 \begin{equation}
\begin{aligned} 
SW = \sum_{i, \, j} h(q^*_{ij}) + \sum_{m_k \in M} h(\epsilon_k)\hspace{2cm}\\
 - \frac{1}{2} \sum_{i, \, j, \, k \, , \ell} u(q^*_{ij}, q^*_{k\ell}) - \beta \sum_{k=1}^{m} \epsilon_k \sum_j q^*_{jk} 
\end{aligned}
\end{equation}
is the social welfare under the circumstances discussed so far.

We start our analysis by calculating $\frac{\partial q^*_{ij}}{\partial \epsilon_r}$. By the idea used in \cite{aliz}, 
\begin{equation}
\begin{aligned}
\frac{\partial q^*_{ij}}{\partial \epsilon_{r}}=f^{\prime}\left(\sum_{i, \, j, \, k \, , \ell} w_{ij, k\ell} q^*_{k\ell} + \gamma \beta \epsilon_{j}\right)\cdot \hspace{2cm}\\
\hspace{2cm} \left(\sum_{i, \, j, \, k \, , \ell} w_{ij, k\ell} \frac{\partial q^*_{k\ell}}{\partial \epsilon_{r}}+\gamma \beta \mathbf{1}\{r=j\}\right) 
\end{aligned}
\end{equation}
Evaluating this equation using the matrix form at the point
 $\epsilon = (\epsilon_1,\cdots, \epsilon_{|E|}) = 0$, which is the absence of the government, 
yields
\begin{equation}
\begin{aligned}
\frac{\partial q^*}{\partial \epsilon_r}|_{\epsilon=0} = -\gamma \beta (I + \gamma W)^{-1} e_r\text{,}
\end{aligned}
\end{equation}
where $e_r$ is a $|E| \times 1$ vector, with ones for edges connecting to market $m_r$ and zeros elsewhere. Thus, we have:

\begin{equation}
\begin{aligned}
\frac{\partial q^*_{ij}}{\partial \epsilon_r}|_{\epsilon=0} = - \gamma \beta \sum_{k} \lambda_{ij, kr}\text{ ,} \label{eq:1}
\end{aligned}
\end{equation}
where $\lambda_{ij, kr}$ is the corresponding element to edges $ij$ and $kr$ of matrix $(I + \gamma W)^{-1}$.

Setting our sights on social welfare, we have:

\begin{equation}
\begin{aligned}
\frac{\partial \text{ SW}}{\partial \epsilon_{r}}= \sum_{i, j} h^{\prime}\left(q^*_{ij}\right) \frac{\partial q^*_{ij}}{\partial \epsilon_{r}}  + \alpha - (\beta + 2c) \epsilon_r \hspace{1.6cm} \\
- \frac{1}{2}\sum_{i, j, k, \ell} \left[\frac{\partial u}{\partial q^*_{ij}} \frac{\partial q^*_{ij}}{\partial \epsilon_r} + \frac{\partial u}{\partial q_{k\ell}} \frac{\partial q^*_{k\ell}}{\partial \epsilon_r}\right] - \beta \sum_k \epsilon_k \sum_j \frac{\partial q^*_{jk}}{\partial \epsilon_r}.
\end{aligned}
\end{equation}

Considering $h^\prime(x) = \alpha - (\beta + 2c)x$ and $\frac{\partial u(q_{ij}, q_{k\ell})}{\partial q_{ij}} = w_{ij, k\ell} q_{k\ell}$, we get:

\begin{equation} \label{eq:2}
\begin{aligned}
\frac{\partial \text{ SW}}{\partial \epsilon_{r}}= \sum_{i,j} (\alpha - (\beta + 2c) q^*_{ij}) \frac{\partial q^*_{ij}}{\partial \epsilon_{r}} +\hspace{2cm}  \\
 - \frac{1}{2}\sum_{i, j, k, \ell} \left[w_{ij, k\ell} q^*_{k\ell} \frac{\partial q^*_{ij}}{\partial \epsilon_r} + w_{ij, k\ell} q^*_{ij} \frac{\partial q^*_{k\ell}}{\partial \epsilon_r}\right] \\
 - \beta \sum_k \epsilon_k \sum_j \frac{\partial q^*_{jk}}{\partial \epsilon_r}+ \alpha - (\beta + 2c) \epsilon_r) \hspace{0.6cm}
\end{aligned}
\end{equation}
Thus, we evaluate equation \eqref{eq:2} at $\epsilon = 0$:

\begin{equation}\label{eq:3}
\begin{aligned}
\frac{\partial \text{ SW}}{\partial \epsilon_{r}}|_{\epsilon=0} = \sum_{i,j} (\alpha - (\beta + 2c) q^*_{ij}|_{\epsilon=0}) \frac{\partial q^*_{ij}}{\partial \epsilon_{r}}|_{\epsilon=0} \\
 - \frac{1}{2}\sum_{i, j, k, \ell} \Big[w_{ij, k\ell} q^*_{k\ell}|_{\epsilon=0} \frac{\partial q^*_{ij}}{\partial \epsilon_r}|_{\epsilon=0} \hspace{0.5cm}\\+ w_{ij, k\ell} q^*_{ij}|_{\epsilon=0} \frac{\partial q^*_{k\ell}}{\partial \epsilon_r}|_{\epsilon=0}\Big] + \alpha ) \hspace{0.8cm}
\end{aligned}
\end{equation}

$q^*_{ij} |_{\epsilon=0}$ has been studied in \cite{shayan}. Based on their results, we can deduce the following in our setting:

\begin{equation}
\begin{aligned}
q^*_{ij} |_{\epsilon=0} = \gamma \alpha \sum_{k\ell} \lambda_{ij, k\ell}\text{ .}\label{eq:5}
\end{aligned}
\end{equation}

Using equation \eqref{eq:1} and \eqref{eq:5}, we get:

\begin{equation}
\begin{aligned}
\frac{\partial \text{ SW}}{\partial \epsilon_{r}}|_{\epsilon=0}= -\gamma \beta\sum_{ij} (\alpha - \gamma \alpha(\beta + 2c) \sum_{k\ell} \lambda_{ij, k\ell}) \sum_{k} \lambda_{ij, kr} - \\ \sum_{i,j} (\gamma \alpha \sum_{k\ell} \lambda_{ij, k\ell} (\sum_{k\ell} w_{ij, k\ell} \sum_{t} \gamma \beta \lambda_{k\ell, tr})) + \alpha \label{eq:6}
\end{aligned}
\end{equation}

Since we have derived a formula for computing $\zeta_r = \frac{\partial SW}{\partial \epsilon_r}|_{\epsilon=0}$s, we can state the final algorithm. This algorithm is shown in Algorithm \ref{algo}. In the first four lines of this algorithm $\zeta_r$s are computed from the network structure and Leontief matrix. After that a set $T$ is initialized to the set of all markets and a variable $S$ is initialized to 0. In the next while loop, $T$ is to be the set of markets that have not been supplied with shocks so far and $S$ is defined as the total goods supplied by the government. So the loop execution will continue until either all markets are supplied with shocks or the cost of shock supplies exceeds the budget intended for this purpose ($B$).

In each iteration of the while loop, the market with maximum $\zeta_r$ is chosen from $T$ and extracted from this set. Then the maximum possible shock is computed as $\epsilon_r$ and is added to $S$. Note that for computing $\epsilon_r$, three upper bounds must be considered:
\begin{enumerate}
\item $q^t$ is the upper bound defined in Definition \ref{finalOPT}. 
\item $\frac{\beta}{\alpha}$ is the upper bound defined by the price function. If $\epsilon_r > \frac{\beta}{\alpha}$, the price will be negative at that market and it is acceptable. 
\item $\sqrt{\frac{B}{c_r}}-S$ is the amount of goods that can be provided by the remaining budget. 
\end{enumerate}

\begin{algorithm*}
\DontPrintSemicolon % Some LaTeX compilers require you to use \dontprintsemicolon instead
\KwIn{A network market $\mathcal{G}$ alongside with parameters $\alpha$, $\beta$ and $\gamma$}
\KwOut{The amount of shocks $\epsilon_1$, $\epsilon_2$, ..., $\epsilon_m$ which makes the maximum social welfare}
\;

Set $\lambda_{ij,kr}$ equal to the corresponding elements to edges $ij$ and $kr$ of matrix $(I+\gamma W)^{-1}$\;
\For{$r \gets 1$ \textbf{to} $m$} {
   $\zeta_r \gets -\gamma \beta\sum_{ij} (\alpha - \gamma \alpha(\beta + 2c) \sum_{k\ell} \lambda_{ij, k\ell}) \sum_{k} \lambda_{ij, kr} -  \sum_{i,j} (\gamma \alpha \sum_{k\ell} \lambda_{ij, k\ell} (\sum_{k\ell} w_{ij, k\ell} \sum_{t} \gamma \beta \lambda_{k\ell, tr})) + \alpha$
}
$T \gets M$
$S \gets 0$\;
\While{$T \neq \emptyset$ and $c\cdot S^2 < B$}{
Set $r$ to the index of the market $m_r \in T$ with maximum $\zeta_r$ \;
$T \gets T \setminus m_r$ \;
$\epsilon_r \gets \min\{q^t, \frac{\beta_r}{\alpha_r}, \sqrt{\frac{B}{c_r}}-S \}$\;
$S \gets S + \epsilon_r$
}
\Return{$\epsilon_r$s} \;
\caption{The Linear Heuristic for solving $MaxSW(\mathcal{G})$ }
\label{algo}
\end{algorithm*}

\section{Empirical Study} \label{experiments}

We evaluate the performance of our proposed measure and method on the synthetic data and a real-world dataset of different pharmaceutical companies as our firms and different markets for drugs. For example, Aspirin and its users define a market in which players are companies that produce this drug. We collect this dataset from different 135 drug companies that produce 603 drugs altogether. Additionally, we use identical parameters $\alpha, \beta, c$ for all the firms and markets, as we are considering the symmetric case. These parameters are set in a way to be close to real-life values. The synthetic data also has 603 markets and 135 firms, which has been chosen randomly. The characteristics of this dataset are shown in Table \ref{dataset}. A subgraph of this network is shown in Figure \ref{subgraph}.  
\begin{figure}[h]
\centering
\includegraphics[width=0.3\textwidth]{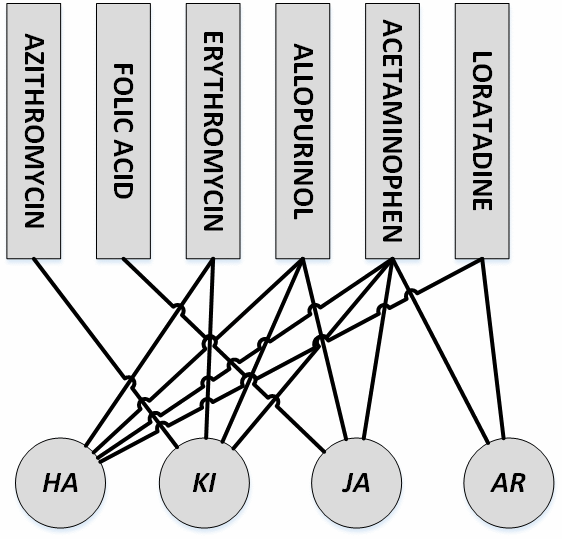}
\caption{A Subgraph of the Drug Company Dataset}
\label{subgraph}
\end{figure}

\begin{table} \centering 
\caption{Summary of Dataset's Characteristics}
\begin{tabular}{ |p{4cm}|p{2cm}|  } 
\hline 
 \multicolumn{2}{|c|}{Drug Companies Dataset} \\
 \hline
Characteristic& Value \\
 \hline
 \#Markets   & 603    \\
 \#Firms&   135  \\
 \#Firm-Market pairs (edges) & 2049 \\
 \hline
\end{tabular}
\label{dataset}
\end{table}

\subsection{Competitor Benchmark}

The essence of competitors we consider is that the government takes on a measure to rank the markets. Next, it supplies goods to markets in that order, as much as possible and as long as permissible. Naively, it is possible to choose the markets at random. No measure, to the best of our knowledge, has been presented for picking the markets yet. Nevertheless, centrality measures are natural candidates for us to use as benchmarks. As for the centrality measures we consider, we use the followings:

\begin{itemize}

\item \textbf{Degree.} 
The simplest centrality measure is the degree of a node, which in our model, is the number of firms competing in a market:
$$
d(m_i) = |F_i|.
$$

\item \textbf{Betweenness.}
Generally speaking, betweenness centrality is a quantity for determining the impact of a node over the flow of information in a graph \cite{brandes2001faster}. The betweenness of a node is an indicator for the fraction of shortest paths in a graph that pass through this vertex. In our setting:

$$
b(m_i)=\sum_{f_j, f_k \in F} \frac{\sigma_{j k}(m_i)}{\sigma_{j k}} ,
$$
where $\sigma_{j k}$ is the total number of shortest paths from firm $f_j$ to firm $f_k$ and $\sigma_{j k}(m_i)$ is the total number of those paths that include market $m_i$.

\item \textbf{Closeness.}
Closeness centrality is an aggregate measure of a node's proximity to other nodes. More precisely, closeness of node $v$ is defined as the inverse of sum of distances of node $v$ from other nodes. Considering our model:

$$
cl(m_i) = \sum_{f_j\in F} \frac{1}{dist(m_i, f_j)} ,
$$
where $dist(m_i, f_j)$ is the distance between market $m_i$ and firm $f_j$ in the graph.

\end{itemize}

Regardless of the measure one picks, it is possible to adopt the ascending or the descending order. Thus, we consider both choices, however, the ascending order empirically shows better performance with these benchmark centrality measures. In conclusion, we present our studied strategies in Table \ref{heuristics}. 

\begin{table}\centering
\caption{Summary of proposed algorithms in the competitor benchmark}
\begin{tabular}{@{}cccl@{}}\toprule
& \multicolumn{1}{c}{Heuristic} & \phantom{abc} & \multicolumn{1}{c}{Description}\\ \midrule
& $Linear$ && Descending order of $\zeta_i$\\
& $AscDeg$ && Ascending order of $d(m_i)$\\
& $DescDeg$ && Descending order of $d(m_i)$\\
& $AscBet$ && Ascending order of $b(m_i)$\\
& $DescBet$ && Descending order of $b(m_i)$\\
& $AscCL$ && Ascending order of $cl(m_i)$\\
& $DescCL$ && Descending order of $cl(m_i)$\\
& $Random$ && Random order of $m_i$\\
 \bottomrule
\end{tabular}
\label{heuristics}
\end{table}

After having our graph, we use Theorem \ref{equil2} to calculate the equilibrium of the game with our parameters $\alpha$, $\beta$, and $c$. Subsequently, we pick a policy $A$ for governance of social welfare from Table \ref{heuristics}. Then, the government begins supplying the amount $q^t$ of the commodity into the markets in the corresponding order, until the supplies violate the constraints in Definition \ref{finalOPT}. Thus, the government would supply up to $\left\lfloor \sqrt{\frac{B}{C}} \right\rfloor$, where $B$ is the budget that innates to the $MaxSW(\mathcal{G})$ problem. Altering $B$ gives us a trajectory for social welfare. Let $SW_{A}(B)$ be the social welfare obtained by applying strategy $A$ on $MaxSW(\mathcal{G})$ with parameter $B$. In the next subsection, we compare trajectories $SW_{A}(B)$ for policies in Table \ref{heuristics}. 

\subsection{General performance}

Our empirical results are indicated in Figures \ref{f1}, \ref{f2}, \ref{f3}, \ref{f4}, \ref{f5} and \ref{f6}. For each policy $A$ of Table \ref{heuristics}, we plot the difference between the amount of social welfare which can be obtained by our Linear algorithm and the social welfare which can be obtained by the heuristic $A$, i.e. $SW_{Linear}(B) - SW_{A}(B)$.  

We observe that our proposed measure is strictly better than other mentioned quantities. For the random case, we use the average results over 50 different realizations to extract the expected performance. It is good to note that picking the markets according to ascending order of a centrality measure seems to be better than the reverse order. This is intuitive, since markets with lower centralities are generally more monopolized and government's interventions have more impact on them. This observation is in accordance with the general belief in economics that higher competition leads to higher social welfare \cite{}. In addition, the difference in social welfare obtained by the Linear heuristics and other methods tend to rise, until a certain point at which drops. Since the end point of the plots suggests the government to supply in almost all markets by each method, the last points have $y$ coordinates close to zero. All the plots are strictly above the horizontal line $y = 0$, except beginning points which indicates the superiority of our proposed approach over other mentioned strategies.

\begin{figure} \centering
   \includegraphics[scale=0.13]{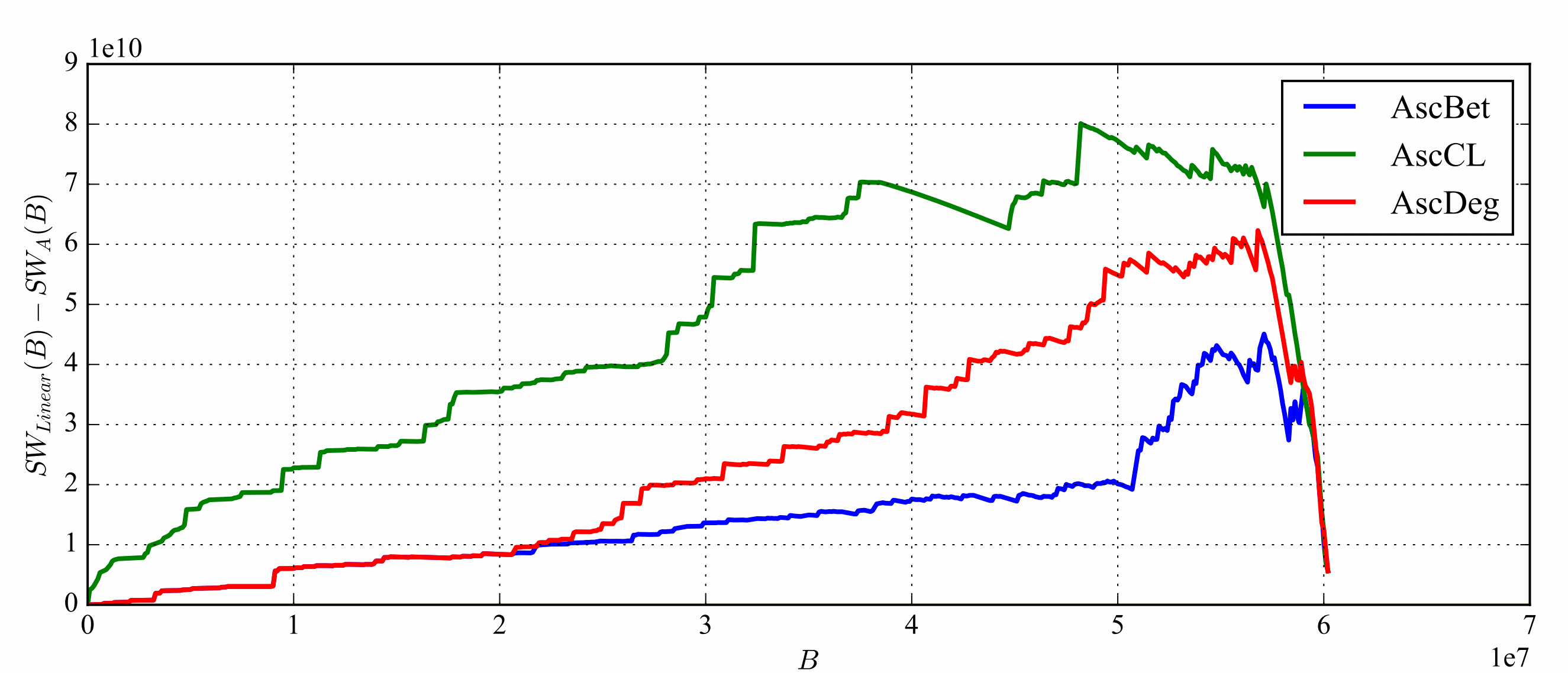}
   \caption{The difference between the social welfare obtained by the Linear heuristic and competitor structural heuristics with ascending order in the dataset of drug companies}
   \label{f1}
\end{figure}
\begin{figure} \centering
   \includegraphics[scale=0.13]{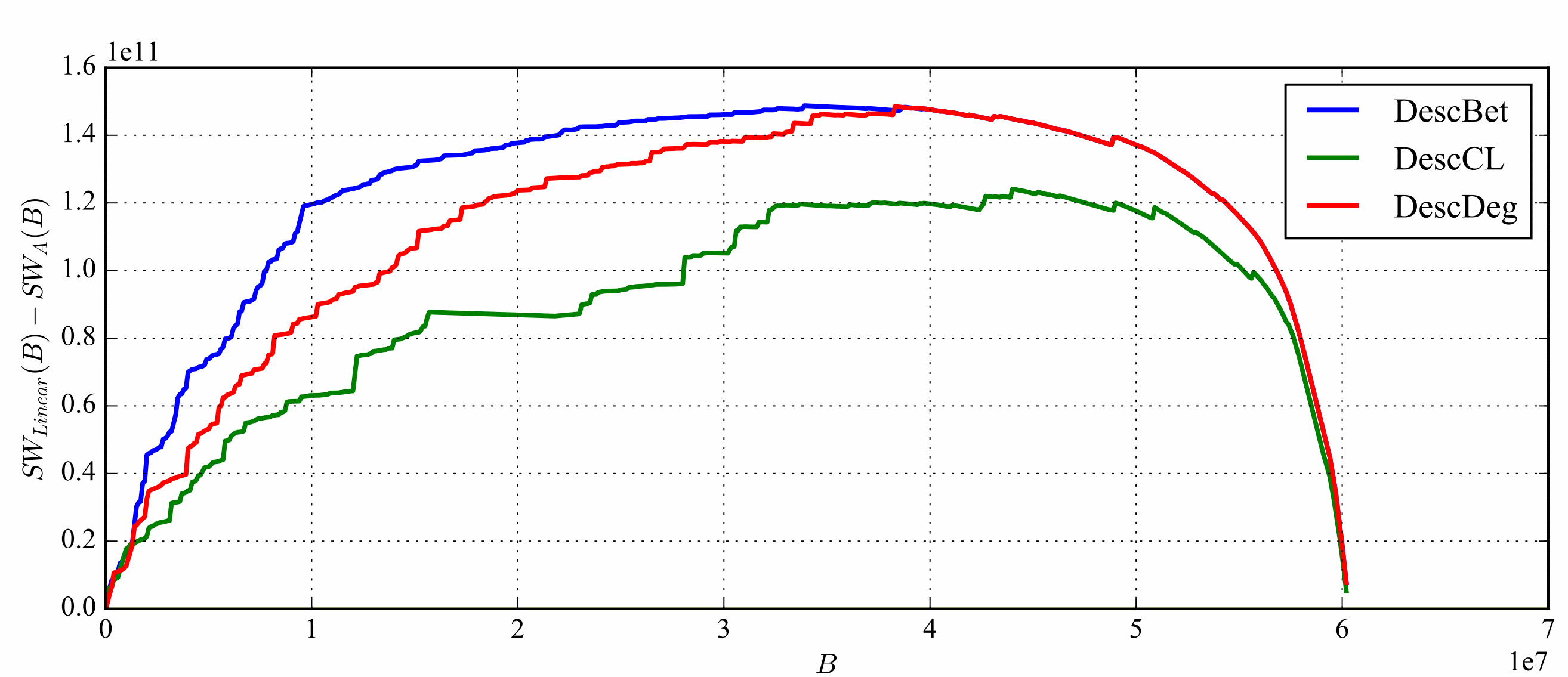}
   \caption{The difference between the social welfare obtained by the Linear heuristic and  competitor structural heuristics with descending order in the dataset of drug companies}
   \label{f2}
\end{figure}
\begin{figure} \centering
   \includegraphics[scale=0.13]{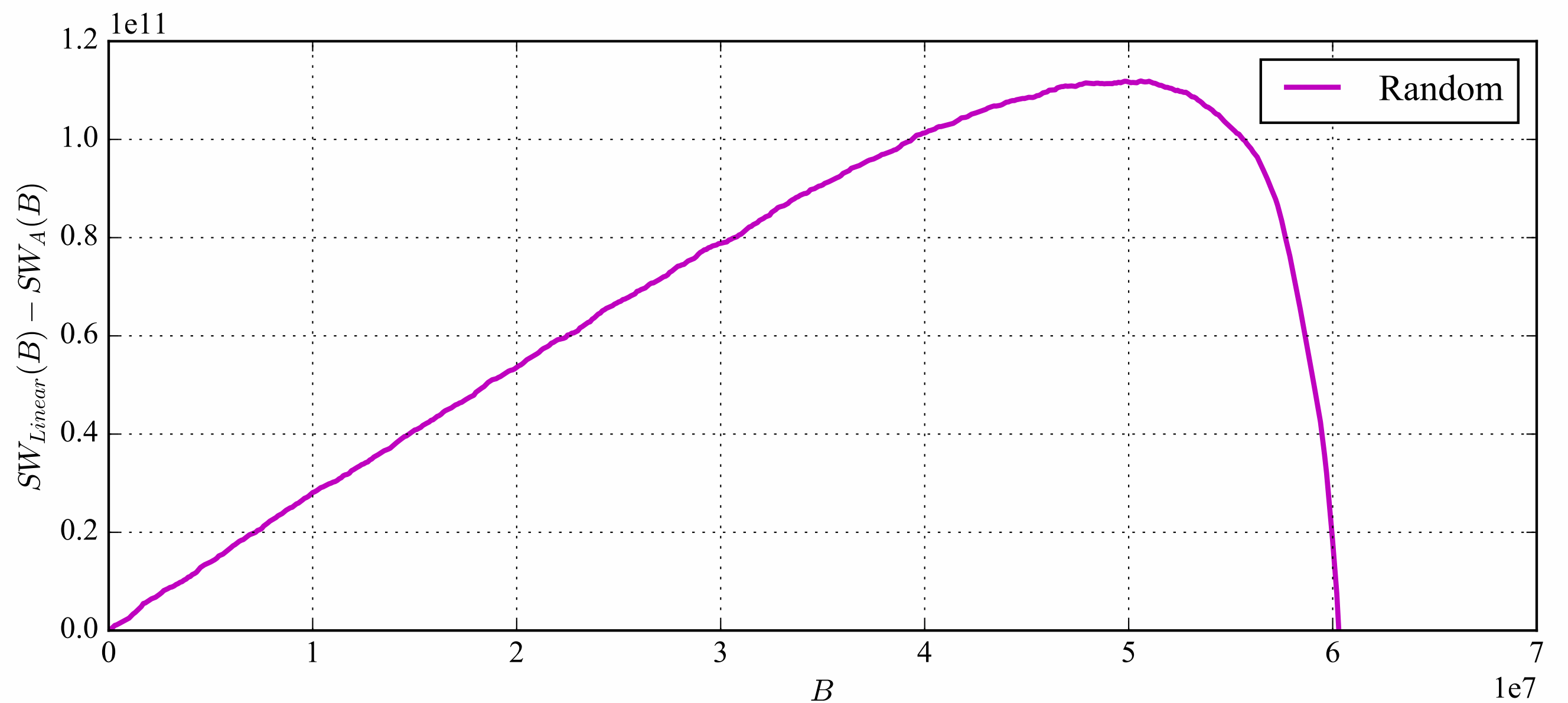}
   \caption{The difference between the social welfare obtained by the Linear heuristic and the random heuristic in the dataset of drug companies}
   \label{f3}
\end{figure}
\begin{figure} \centering
   \includegraphics[scale=0.13]{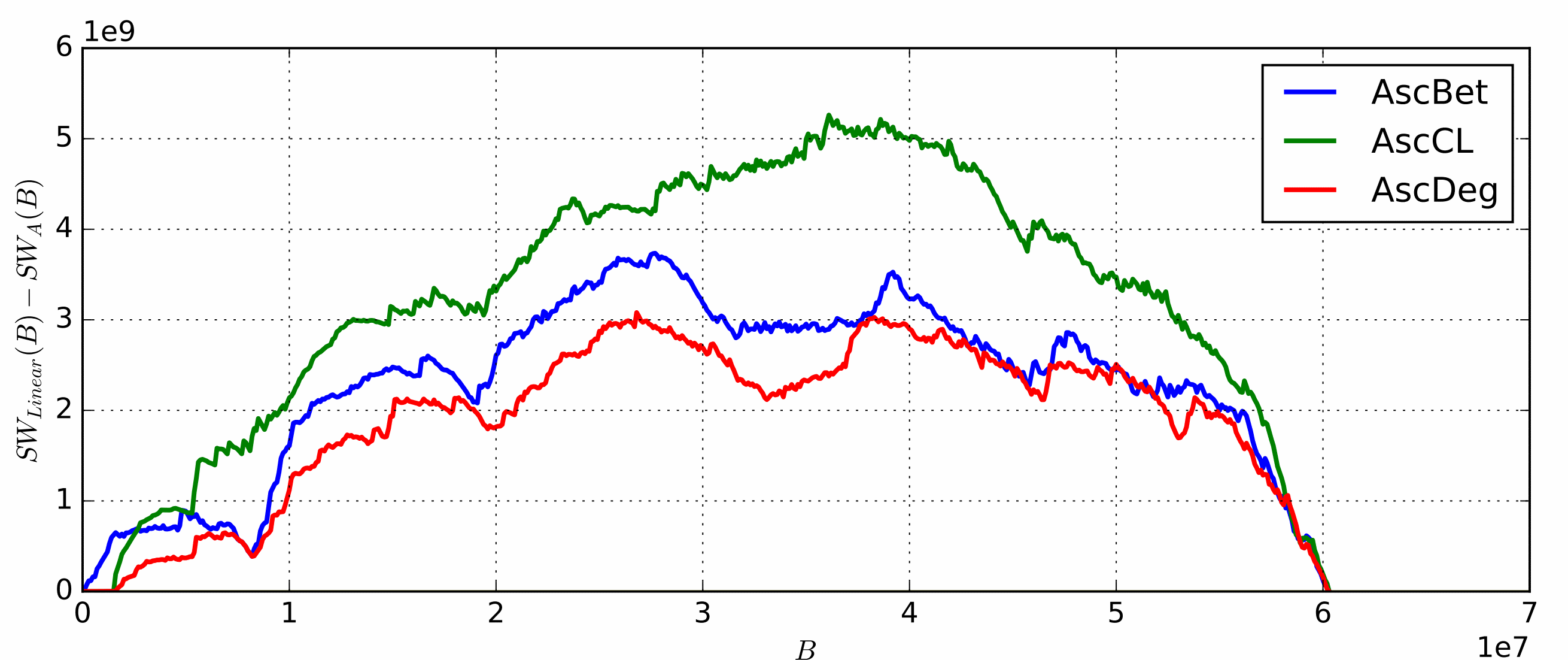}
   \caption{The difference between the social welfare obtained by the Linear heuristic and competitor structural heuristics with ascending order in the synthetic dataset}
   \label{f4}
\end{figure}
\begin{figure} \centering
   \includegraphics[scale=0.13]{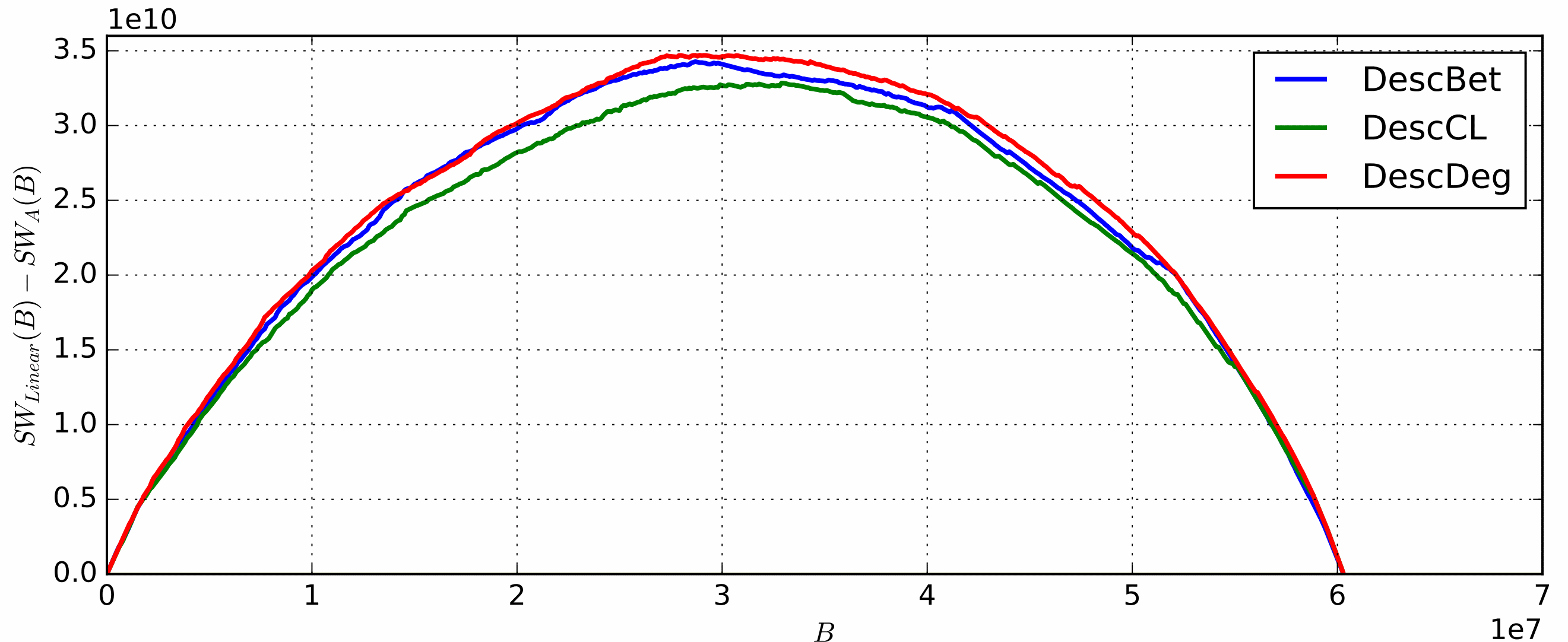}
   \caption{The difference between the social welfare obtained by the Linear heuristic and  competitor structural heuristics with descending order in the synthetic dataset}
   \label{f5}
\end{figure}
\begin{figure} \centering
   \includegraphics[scale=0.13]{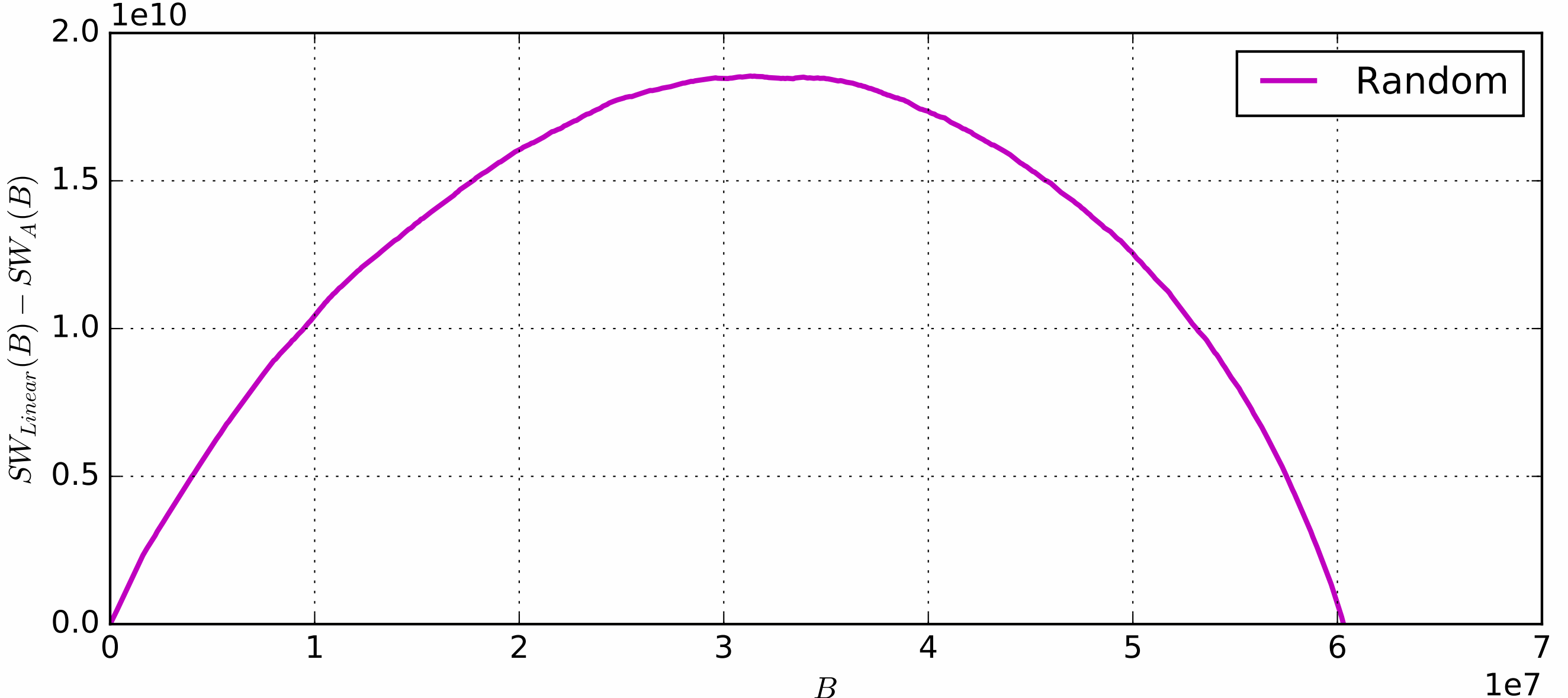}
   \caption{The difference between the social welfare obtained by the Linear heuristic and the random heuristic in the synthetic dataset}
   \label{f6}
\end{figure}

\clearpage

\bibliographystyle{unsrt}  

\end{document}